\definecolor{RED}{RGB}{255,0,0}
\newtheorem{Claim}{Claim}
\begin{document}

\title{On the Minimum Cycle Cover problem on graphs with bounded co-degeneracy\thanks{This research has received funding from Rio de Janeiro Research Support Foundation (FAPERJ) under grant agreement E-26/201.344/2021,  National Council for Scientific  and Technological Development (CNPq) under grant agreement 309832/2020-9, and the European Research Council (ERC) under the 
\begin{minipage}{0.75\textwidth}
European Union's Horizon $2020$ research and innovation programme under grant agreement CUTACOMBS (No. $714704$). 
\end{minipage}
\begin{minipage}{0.2\textwidth}
        \includegraphics[scale=.63]{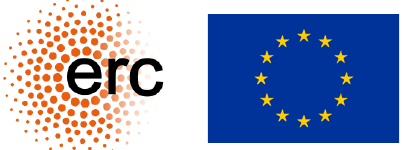}
\end{minipage}}}
\titlerunning{On the Minimum Cycle Cover problem on graphs with bounded co-degeneracy}
%
\author{Gabriel L. Duarte\inst{2}
\and
Uéverton S. Souza\inst{1,2}
}
\authorrunning{G. L. Duarte and U. S. Souza}
\institute{Institute of Informatics, University of Warsaw, Warsaw, Poland 
\and
Instituto de Computação, Universidade Federal Fluminense, Niterói, Brazil
\\
\email{gabrield@id.uff.br},
\email{ueverton@ic.uff.br}
}
\maketitle  

\begin{abstract}
In 2017, Knop, Kouteck{\'y}, Masařík, and Toufar [WG 2017] asked about the complexity of deciding graph problems $\Pi$ on the complement of $G$ considering a parameter $p$ of $G$, especially for sparse graph parameters such as treewidth. In 2021, Duarte, Oliveira, and Souza [MFCS~2021] showed some problems that are FPT when parameterized by the treewidth of the complement graph (called co-treewidth). Since the degeneracy of a graph is at most its treewidth, they also introduced the study of co-degeneracy (the degeneracy of the complement graph) as a parameter. 
In 1976, Bondy and Chv\'{a}tal [DM 1976] introduced the notion of {\em closure} of a graph: let $\ell$ be an integer; the $(n+\ell)$-closure, $\operatorname{cl}_{n+\ell}(G)$, of a graph $G$ with $n$ vertices is obtained from $G$ by recursively adding an edge between pairs of nonadjacent vertices whose degree sum is at least $n+\ell$ until no such pair remains. 
A graph property $\Upsilon$ defined on all graphs of order $n$ is said to be $(n+\ell)$-stable if for any graph $G$ of order $n$ that does not satisfy $\Upsilon$, the fact that $uv$ is not an edge of $G$ and that $G+uv$ satisfies $\Upsilon$ implies $d(u)+d(v)< n+\ell$. 
Duarte et al. [MFCS 2021] developed an algorithmic framework for co-degeneracy parameterization based on the notion of closures for solving problems that are $(n+\ell)$-stable for some $\ell$ bounded by a function of the co-degeneracy.  
In 2019, Jansen, Kozma, and Nederlof [WG 2019] relax the conditions of Dirac’s theorem and consider input graphs $G$ in which at least $n-k$ vertices have degree at least $\frac{n}{2}$, and present an FPT algorithm concerning to $k$, to decide whether such graphs $G$ are Hamiltonian.
In this paper, we first determine the stability of the property of having a bounded cycle cover. After that, combining the framework of Duarte et al. [MFCS 2021] with some results of Jansen et al. [WG 2019], we obtain a $2^{\mathcal{O}(k)}\cdot n^{\mathcal{O}(1)}$-time algorithm for {\sc Minimum Cycle Cover} on graphs with co-degeneracy at most $k$, which generalizes Duarte et al. [MFCS 2021] and Jansen~et~al. [WG 2019] results concerning the {\sc Hamiltonian Cycle} problem.

\end{abstract}
\keywords{degeneracy. complement graph. cycle cover. FPT. kernel}

\section{Introduction}

Graph width parameters are useful tools for identifying tractable classes of instances for NP-hard problems and designing efficient algorithms for such problems on these instances. 
Treewidth and clique-width are two of the most po\-pu\-lar graph width pa\-ra\-me\-ters. 
An algorithmic meta-theorem due to Courcelle, Makowsky, and Rotics~\cite{Co00} states that any
problem expressible in the monadic second-order logic on graphs (MSO$_1$) can be solved in FPT time when parameterized by the clique-width of the input graph.\footnote{Originally this required a clique-width expression as part of the input.} 
In addition, Courcelle~\cite{C90} states that any problem expressible in the monadic second-order logic of graphs with edge set quantifications (MSO$_2$) can be solved in
FPT time when parameterized by the treewidth of the input graph. 
Although the class of graphs with bounded treewidth is a subclass of the class of graphs with bounded clique-width~\cite{corneil2005relationship}, the MSO$_2$ logic on graphs extends the MSO$_1$ logic, and there are MSO$_2$
properties like ``$G$ has a Hamiltonian cycle'' that are not
MSO$_1$ expressible~\cite{courcelle1994monadic}. 
In addition, there are problems that are fixed-parameter tractable when parameterized by treewidth, such as {\sc MaxCut}, {\sc Largest Bond}, {\sc Longest Cycle}, {\sc Longest Path}, {\sc Edge Dominating Set}, {\sc Graph Coloring}, {\sc Clique Cover}, {\sc Minimum Path Cover}, and {\sc Minimum Cycle Cover} that cannot be FPT when parameterized by clique-width~\cite{Bond,fomin2009clique,fomin2010algorithmic,fomin2010intractability,fomin2014almost}, unless FPT = W[1].

For problems that are fixed-parameter tractable concerning treewidth, but intractable when parameterized by clique-width, the identification of tractable classes of instances of bounded clique-width and unbounded treewidth becomes a fundamental quest~\cite{MFCS_2021}. In 2016,  Dvo\v{r}{\'{a}}k, Knop, and Masařík~\cite{DBLP:journals/corr/DvorakKM16a} showed that
{\sc $k$-Path Cover} is FPT when parameterized by the treewidth of the complement of the input graph. This implies that {\sc Hamiltonian Path} is FPT when parameterized by the treewidth of the complement graph. In 2017, Knop, Kouteck{\'y}, Masařík, and Toufar (WG 2017,~\cite{knop2017simplified}) asked about the complexity of
deciding graph problems $\Pi$ on the complement of $G$ considering a parameter $p$ of $G$  (i.e., with respect to $p(G)$), especially for sparse graph parameters such as treewidth. In fact, the treewidth of the complement of the input graph, proposed be called {\em co-treewidth} in~\cite{MFCS_2021}, seems a nice width parameter to deal with dense instances of problems that are hard concerning clique-width. {\sc MaxCut},
{\sc Clique Cover}, and {\sc Graph Coloring} are example of problems W[1]-hard concerning clique-width but FPT-time solvable when parameterized by co-treewidth (see~\cite{MFCS_2021}).

The \emph{degeneracy} of a graph $G$ is the least $k$ such that every induced subgraph of $G$ contains a vertex with degree at most $k$. {Equivalently, the degeneracy of $G$ is the least $k$ such that its vertices can be arranged into a sequence so that each vertex is adjacent to at most $k$ vertices preceding it in the sequence.} It is well-known that the degeneracy of a graph is upper bounded by its treewidth; thus, the class of graphs with bounded treewidth is also a subclass of the class of graphs with bounded degeneracy. In~\cite{MFCS_2021}, Duarte, Oliveira, and Souza presented an algorithmic framework to deal with the degeneracy of the complement graph, called {\em co-degeneracy}, as a parameter. 

Although the notion of co-parameters is as natural as their complementary versions, just a few studies have ventured into the world of dense instances with respect to sparse parameters of their complements. 
Also, note that would be natural to consider \emph{``co-clique-width''} parameterization, but Courcelle and Olariu~\cite{CO00-2} proved that for every graph $G$ its clique-width is at most twice the clique-width of $\overline{G}$. Thus, the co-clique-width notion is redundant from the point of view of parameterized complexity. 
Therefore, in the sense of being a useful parameter for many NP-hard problems in identifying a large and new class of (dense) instances that can be efficiently handled, the co-degeneracy seems interesting because it is incomparable with clique-width and stronger\footnote{A parameter $y$ is stronger than $x$, if the set of instances where $x$ is bounded is a subset of those where $y$ is bounded.} than co-treewidth. 

In~\cite{MFCS_2021}, Duarte, Oliveira, and Souza developed an algorithmic framework for co-degeneracy parameterization based on the notion of Bondy-Chv\'{a}tal closure for solving problems that have a ``bounded'' stability concerning some closure. More precisely, for a graph $G$ with $n$ vertices, and two distinct nonadjacent vertices $u$ and $v$ of $G$ such that $d(u)+d(v)\geq n$, Ore's theorem states that $G$ is hamiltonian if and only if $G+uv$ is hamiltonian. In 1976, Bondy and Chv\'{a}tal~\cite{bondy1976method} generalized Ore's theorem and defined the {\em closure} of a graph: 
\vspace{-.1cm}
\begin{itemize}
    \item let $\ell$ be an integer; the $(n+\ell)$-closure, $\operatorname{cl}_{n+\ell}(G)$, of a graph $G$ is obtained from $G$ by recursively adding an edge between pairs of nonadjacent vertices whose degree sum is at least $n+\ell$ until no such pair remains.
\vspace{-.1cm}
\end{itemize}
 Bondy and Chv\'{a}tal showed that $\operatorname{cl}_{n+\ell}(G)$ is uniquely determined from $G$ and that $G$ is hamiltonian if and only if $\operatorname{cl}_n(G)$ is hamiltonian.
 
A property $\Upsilon$ defined on all graphs of order $n$ is said to be $(n+\ell)$-stable if for any graph $G$ of order $n$ that does not satisfy $\Upsilon$, the fact that $uv$ is not an edge of $G$ and that $G+uv$ satisfies $\Upsilon$ implies $d(u)+d(v)< n+\ell$.  In other words, if $uv\notin E(G)$, $d(u)+d(v)\geq n+\ell$ and $G+uv$ has property $\Upsilon$, then $G$ itself has property $\Upsilon$ (c.f.~\cite{broersma2000closure}). The smallest integer $n+\ell$ such that $\Upsilon$ is $(n+\ell)$-stable is the \emph{stability} of $\Upsilon$, denoted by $s(\Upsilon)$. Note that Bondy and Chv\'{a}tal showed that Hamiltonicity is $n$-stable. A survey on the stability of graph properties can be found in~\cite{broersma2000closure}.

In~\cite{MFCS_2021}, based on the fact that the class of graphs with co-degeneracy at most $k$ is closed under completion (edge addition), it was proposed the following framework for determining whether a graph $G$ satisfies a property $\Upsilon$ in FPT time regarding the co-degeneracy of $G$, denoted by $k$: 


\begin{enumerate}
    \item determine an upper bound for $s(\Upsilon)$ - the stability of $\Upsilon$; 
    \item If $s(\Upsilon)\leq n+\ell$ where $\ell\leq f(k)$ (for some computable function $f$) then   
    \begin{enumerate}
        \item set $G = \operatorname{cl}_{n+\ell}(G)$;
        \item since $G=\operatorname{cl}_{n+\ell}(G)$ and $G$ has co-degeneracy $k$ then $G$ has co-vertex cover number (distance to clique) at most $2k+\ell+1$ (see~\cite{MFCS_2021});
        \item at this point, it is enough to solve the problem in FPT-time concerning co-vertex cover parameterization.\vspace{-.1cm}  
    \end{enumerate}
\end{enumerate}


In~\cite{MFCS_2021}, using such a framework, it was shown that {\sc Hamiltonian Path}, {\sc Hamiltonian Cycle}, {\sc Longest Path},  {\sc Longest Cycle}, and {\sc Minimum Path Cover} are all fixed-parameter tractable when parameterized by co-degeneracy.
Note that {\sc Longest Path} and {\sc Minimum Path Cover} are two distinct ways to generalize the {\sc Hamiltonian Path} problem just as {\sc Longest Cycle} and {\sc Minimum Cycle Cover} generalize the {\sc Hamiltonian Cycle} problem. However, the {\sc Minimum Cycle Cover} problem seems to be more challenging than the others concerning co-degeneracy parameterization, even because the stability of having a cycle cover of size at most $r$, to the best of our knowledge, is unknown. 

In the {\sc Minimum Cycle Cover} problem, we are given a simple graph $G$ and asked to find a minimum set $S$ of vertex-disjoint cycles of $G$ such that each vertex of $G$ is contained in one cycle of $S$, where single vertices are considered trivial cycles. Note that each nontrivial cycle has size at least three. In this paper, our focus is on {\sc Minimum Cycle Cover} parameterized by co-degeneracy.

The Dirac’s theorem from 1952 (see~\cite{dirac1952some}) states that a graph $G$ with $n$ vertices ($n\geq 3$) is Hamiltonian if every vertex of $G$ has degree at least $\frac{n}{2}$. In~\cite{jansen2019hamiltonicity}, Jansen, Kozma, and Nederlof relax the conditions of Dirac’s theorem and consider input graphs $G$ in
which at least $n-k$ vertices have degree at least $\frac{n}{2}$, and present a $2^{\mathcal{O}(k)}\cdot n^{\mathcal{O}(1)}$-time algorithm to decide whether $G$ has a Hamiltonian cycle. 
In~2022, F. Fomin, P. Golovach, D. Sagunov, and K. Simonov~\cite{SODA_Fedor} presented the following algorithmic generalization of Dirac's theorem: if all but $k$ vertices of a 2-connected graph $G$ are of degree at least $\delta$, then deciding whether $G$ has a cycle of length at least $\min\{2\delta + k, n\}$ can be done in time $2^k \cdot n^{\mathcal{O}(1)}$.
Besides, in 2020, F. Fomin, P. Golovach, D. Lokshtanov, F. Panolan, S. Saurabh, and M. Zehavi~\cite{SJDM_fahad} proved that deciding whether a $2$-connected $d$-degenerate $n$-vertex $G$ contains a cycle of length at least $d+k$ can be done in time $2^{\mathcal{O}(k)}\cdot n^{\mathcal{O}(1)}$.

In this paper, we first determine the stability of the property of having a cycle cover of size at most $r$. After that,  using the closure framework proposed in~\cite{MFCS_2021} together with some results and techniques presented in~\cite{jansen2019hamiltonicity}, we show that {\sc Minimum Cycle Cover} admits a kernel with linear number of vertices when parameterized by co-degeneracy. After that, by designing an exact single-exponential time algorithm for solving {\sc Minimum Cycle Cover}, we obtain as a corollary a $2^{\mathcal{O}(k)}\cdot n^{\mathcal{O}(1)}$-time algorithm for the {\sc Minimum Cycle Cover} problem on graphs with co-degeneracy at most $k$. These results also implies a $2^{\mathcal{O}(k)}\cdot n^{\mathcal{O}(1)}$-time algorithm for solving {\sc Minimum Cycle Cover} on graphs $G$ in which at least $n-k$ vertices have degree at least $\frac{n}{2}$, generalizing the Jansen, Kozma, and Nederlof's result presented in~\cite{jansen2019hamiltonicity} (WG~2019) for the {\sc Hamiltonian Cycle} problem. Also, the single-exponential FPT algorithm for {\sc Minimum Cycle Cover} parameterized by co-degeneracy implies that {\sc Hamiltonian Cycle} can be solved with the same running time, improving the current state of the art for solving the {\sc Hamiltonian Cycle} problem parameterized by co-degeneracy since the algorithm presented in~\cite{MFCS_2021} runs in $2^{\mathcal{O}(k\log k)}\cdot n^{\mathcal{O}(1)}$ time, where $k$ is the co-degeneracy. 
Note that our results also imply that {\sc Minimum Cycle Cover} on \emph{co-planar} graphs can be solved in polynomial time, which seemed to be unknown in the literature.  


\section{On the stability of having a bounded cycle cover}

Although the stability of several properties has already been studied (c.f.~\cite{broersma2000closure}), the stability of the property of having a cycle cover of size at most $r$, to the best of our knowledge, is unknown. Therefore, we show that $s(\Upsilon) \leq n$, where $r$ is any positive integer, and $\Upsilon$ is the property of having a cycle cover of size at most $r$.

\begin{lemma}\label{cyclecovertheorem}
Let $r$ be a positive integer.
A simple graph $G$ with $n$ vertices has a cycle cover of size at most $r$ if and only if its $n$-closure, $\operatorname{cl}_{n}(G)$, has also a cycle cover of size at most $r$.
\end{lemma}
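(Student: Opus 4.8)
The plan is to establish that the property $\Upsilon$ = ``$G$ has a cycle cover of size at most $r$'' is $n$-stable, and then invoke the Bondy--Chv\'atal machinery recalled above. Since the $n$-closure is obtained by a sequence of single edge additions $G=G_0\subset G_1\subset\cdots\subset G_q=\operatorname{cl}_n(G)$, each joining two nonadjacent vertices whose degree sum \emph{in the current graph} is at least $n$, $n$-stability of $\Upsilon$ propagates the property downwards along this sequence and yields the claimed equivalence. The forward implication is immediate: every cycle of a cover of $G$ is still a cycle of the supergraph $\operatorname{cl}_n(G)$ on the same vertex set, so a cover of size at most $r$ survives edge additions. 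Thus the content lies in the single-edge step: if $uv\notin E(G)$, $d_G(u)+d_G(v)\ge n$, and $G+uv$ has a cover of size $t\le r$, then so does $G$.

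For the single-edge step I would fix a cover $\mathcal S=\{C_1,\dots,C_t\}$ of $G+uv$. If no cycle uses $uv$ then $\mathcal S$ is already a cover of $G$, so assume $uv\in C_1$ and write $P\colon u=x_1,x_2,\dots,x_m=v$ for the path obtained by deleting $uv$ from $C_1$ (note $m\ge 3$). The first tool is the classical Ore-type rotation: set $S_u=\{i:ux_{i+1}\in E(G)\}$ and $T_v=\{i:vx_i\in E(G)\}$, both subsets of $\{1,\dots,m-1\}$ with $|S_u|=|N_G(u)\cap V(C_1)|$ and $|T_v|=|N_G(v)\cap V(C_1)|$. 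If $S_u\cap T_v\neq\emptyset$, witnessed by some index $i$, then $x_1\cdots x_i\,x_m x_{m-1}\cdots x_{i+1}\,x_1$ is a cycle of $G$ on $V(C_1)$ avoiding $uv$; replacing $C_1$ by it gives a cover of $G$ of size $t$.

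The main obstacle is the complementary case $S_u\cap T_v=\emptyset$, where $u$ and $v$ have too few neighbours inside $C_1$ for an internal rotation; this is exactly where the cycle-cover setting departs from pure Hamiltonicity, since the degree budget $d_G(u)+d_G(v)\ge n$ may be spent on vertices outside $C_1$. Here I would turn the deficit into a surplus: $S_u\cap T_v=\emptyset$ forces $|N_G(u)\cap V(C_1)|+|N_G(v)\cap V(C_1)|\le m-1$, hence $u$ and $v$ together send at least $n-(m-1)=|R|+1$ edges into $R:=V(G)\setminus V(C_1)$, which is partitioned by the remaining cycles. Averaging over these cycles produces one cycle $C_j$ (possibly a trivial single-vertex cycle) with $a_j+b_j\ge |C_j|+1$, where $a_j=|N_G(u)\cap V(C_j)|$ and $b_j=|N_G(v)\cap V(C_j)|$. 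Writing $C_j=z_1\cdots z_p z_1$ and letting $A,B$ be the neighbourhoods of $u,v$ on $C_j$, the bound $|A|+|B|\ge p+1$ together with a successor-shift pigeonhole (the cyclic successor set $A^{+}$ of $A$ has $|A^{+}|=|A|$, so $|A^{+}|+|B|\ge p+1>p$ gives $A^{+}\cap B\neq\emptyset$) yields an edge $z_sz_{s+1}$ of $C_j$ with $u\sim z_s$ and $v\sim z_{s+1}$.

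With such an edge in hand I would absorb $C_j$ into $C_1$ by forming the cycle
\[
u,\,z_s,z_{s-1},\dots,z_1,z_p,\dots,z_{s+1},\,v,\,x_{m-1},\dots,x_2,\,u,
\]
which traverses all of $V(C_j)$ along $C_j$ minus the edge $z_sz_{s+1}$, crosses to $v$, and returns to $u$ along $P$ reversed; all its edges lie in $G$ and it avoids $uv$. Merging $C_1$ and $C_j$ into this single cycle and keeping the other cycles untouched produces a cover of $G$ of size $t-1\le r$ (the trivial-cycle case is the degenerate $p=1$ instance, where a common neighbour of $u$ and $v$ is absorbed). I expect the averaging-plus-successor-shift argument that locates a \emph{mergeable} outer cycle to be the delicate point, since a single common neighbour lying on a long outer cycle is provably insufficient to merge; it is precisely the surplus $|A|+|B|\ge |C_j|+1$ that guarantees a suitable boundary edge and makes the absorption always possible. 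Iterating the single-edge step down the closure sequence then completes the proof.
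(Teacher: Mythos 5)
Your proof is correct and takes essentially the same route as the paper's: downward induction along the closure's edge-addition sequence, an Ore-type rotation on the path $P$ obtained by deleting $uv$ from its cycle, and absorption of an outer cycle $C_j$ through an edge $z_sz_{s+1}$ with $uz_s,\,vz_{s+1}\in E(G)$. The only difference is bookkeeping in locating the needed configuration: the paper uses a single global pigeonhole ($|X|+|Y|\ge n-2$ versus $|X\cup Y|\le n-3$, where $X$ collects shifted neighbours of one endpoint over the path and all cycles and $Y$ the neighbours of the other), whereas you split the count into a path-local rotation argument plus a per-cycle averaging and successor-shift argument --- but both yield exactly the same two cases and the same cycle constructions.
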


\begin{proof}
Let $G$ be a simple graph with $n$ vertices, $r$ be a positive integer, and $\Upsilon$ be the graph property of having a cycle cover of size at most $r$. 
Since the claim trivially holds when $r=0$ or $r\geq n$, we assume that $1\leq r\leq n-1$.

First, note that if $G$ has a cycle cover $S$ of size $r$ then the set $S$ is also a cycle cover of $\operatorname{cl}_{n}(G)$, because $G$ is a spanning subgraph of $\operatorname{cl}_{n}(G)$.

Now, suppose that $G$ does not have a cycle cover of size at most $r$ but $\operatorname{cl}_{n}(G)$ has a cycle cover of size at most $r$.

Given that $\operatorname{cl}_{n}(G)$ is uniquely determined from $G$~\cite{bondy1976method}, the construction of $\operatorname{cl}_{n+\ell}(G)$ can be seen as an iterative process of adding edges, starting from $G$, where a single edge is added at each step $i$, until no more edges can be added.
Let $E_0 = E(G)$. We call by $E_i$ the resulting set of edges after adding $i$ edges during such a process. Therefore, $G_0 = G$, $G_1 = (V, E_1), G_2 = (V, E_2),\ldots, G_t = (V, E_t)$, where $G_t = \operatorname{cl}_{n}(G)$ is the finite sequence of graphs generated during a construction of the $n$-closure of $G$.

Since $G$ does not have a cycle cover of size at most $r$ but $\operatorname{cl}_{n}(G)$ has a cycle cover of size at most $r$, by the construction of $\operatorname{cl}_{n}(G)$, there is a single $i$ ($1 \leq i \leq t$) such that $G_{i-1}$ does not has a cycle cover of size at most $r$ but $G_i$ has a cycle cover of size at most $r$. 
Let $\{uw\} = E_i \setminus E_{i-1}$. 

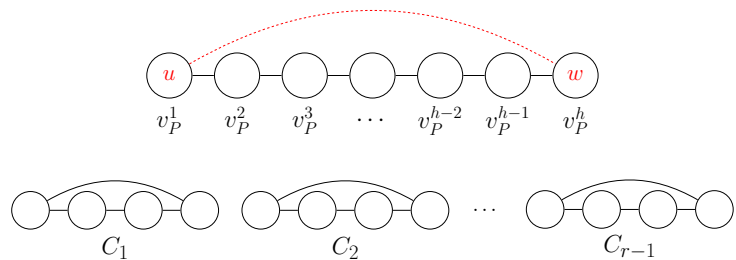
\begin{figure}[!htb]
    \centering
        \scalebox{0.3}{
            \begin{tikzpicture}
                \begin{scope}[every node/.style={circle,thick,draw, scale=2, minimum size=1cm}]
                    \node (1) at (0,0) {\Large \textcolor{red}{$u$}};
                    \node (2) at (3,0) {\Large };
                    \node (3) at (6,0) {\Large };
                    \node (4) at (9,0) {\Large };
                    \node (5) at (12,0) {};
                    \node (6) at (15,0) {};
                    \node (7) at (18,0) {\Large \textcolor{red}{$w$}};
                \end{scope}
 
                \begin{scope}[>={stealth[black]},
                              every node/.style={fill=white,circle},
                              every edge/.style={draw=black, thick}
                              ]
                    \path [-] (1) edge (2); 
                    \path [-] (2) edge (3);
                    \path [-] (3) edge (4);
                    \path [-] (4) edge (5);
                    \path [-] (5) edge (6);
                    \path [-] (6) edge  (7);
                \end{scope}
                    \begin{scope}[>={stealth[black]},
                              every node/.style={fill=white,circle},
                              every edge/.style={draw=red, thick}
                              ]
                    \path [-] (1) edge [bend left]  [dashed]  (7);
                \end{scope}
                   \begin{scope}[every node/.style={scale=1.5}]
                    \node (label) at (0,-2) {\huge {$v_P^1$}};
                    \node (label) at (3,-2) {\huge {$v_P^2$}};
                    \node (label) at (6,-2) {\huge {$v_P^3$}};
                    \node (label) at (9,-2) {\huge {$\cdots$}};
                    \node (label) at (12,-2) {\huge {$v_P^{h-2}$}};
                    \node (label) at (15,-2) {\huge {$v_P^{h-1}$}};
                    \node (label) at (18,-2) {\huge {$v_P^h$}};
                \end{scope}
            \end{tikzpicture}
    }%
    
\bigskip

        \centering
        \scalebox{0.25}{
            \begin{tikzpicture}
                \begin{scope}[every node/.style={circle,thick,draw, scale=2, minimum size=1cm}]
                    \node (1) at (0,0) {\Large};
                    \node (2) at (3,0) {\Large };
                    \node (3) at (6,0) {\Large };
                    \node (4) at (9,0) {\Large };
                \end{scope}
 
                \begin{scope}[>={stealth[black]},
                              every node/.style={fill=white,circle},
                              every edge/.style={draw=black, thick}
                              ]
                    \path [-] (1) edge (2); 
                    \path [-] (2) edge (3);
                    \path [-] (3) edge (4);
                    \path [-] (1) edge [bend left]  (4);
                \end{scope}
                   \begin{scope}[every node/.style={scale=1.5}]
                    \node (label) at (4.5,-2) {\Huge $C_1$};
                \end{scope}
            \end{tikzpicture}
            \hspace{1cm}
             \begin{tikzpicture}
                \begin{scope}[every node/.style={circle,thick,draw, scale=2, minimum size=1cm}]
                    \node (1) at (0,0) {\Large};
                    \node (2) at (3,0) {\Large };
                    \node (3) at (6,0) {\Large };
                    \node (4) at (9,0) {\Large };
                \end{scope}
                \begin{scope}[every node/.style={scale=1.5}]
                    \node (label) at (12,0) {\huge {$\cdots$}};
                \end{scope}
 
                \begin{scope}[>={stealth[black]},
                              every node/.style={fill=white,circle},
                              every edge/.style={draw=black, thick}
                              ]
                    \path [-] (1) edge (2); 
                    \path [-] (2) edge (3);
                    \path [-] (3) edge (4);
                    \path [-] (1) edge [bend left]  (4);
                \end{scope}
                   \begin{scope}[every node/.style={scale=1.5}]
                    \node (label) at (4.5,-2) {\Huge $C_2$};
                \end{scope}
            \end{tikzpicture}
            \hspace{1cm}
             \begin{tikzpicture}
                \begin{scope}[every node/.style={circle,thick,draw, scale=2, minimum size=1cm}]
                    \node (1) at (0,0) {\Large};
                    \node (2) at (3,0) {\Large };
                    \node (3) at (6,0) {\Large };
                    \node (4) at (9,0) {\Large };
                \end{scope}
 
                \begin{scope}[>={stealth[black]},
                              every node/.style={fill=white,circle},
                              every edge/.style={draw=black, thick}
                              ]
                    \path [-] (1) edge (2); 
                    \path [-] (2) edge (3);
                    \path [-] (3) edge (4);
                    \path [-] (1) edge [bend left]  (4);
                \end{scope}
                   \begin{scope}[every node/.style={scale=1.5}]
                    \node (label) at (4.5,-2) {\Huge $C_{r-1}$};
                \end{scope}
            \end{tikzpicture}
    }
    \caption{Representation of a graph with $r-1$ cycles, a path of size $h$ and the edge $vw$ that will be added, creating a graph with $r$ cycles.}
    \label{fig1}
\end{figure}


Suppose that $G_i$ has a cycle cover $S_i$ of size at most $r$. For simplicity, we assume that $|S_i|=r$.
Therefore, the vertices of $G_{i-1}$ can be covered by a set formed by $r-1$ cycles $C_1, C_2, \ldots, C_{r-1}$ and a path $P$ (the cycle of $G_i$ that contains the edge $uw$). Assume that each cycle $C_j$ is defined by the sequence $v_{C_j}^1,v_{C_j}^2,\ldots, v_{C_j}^{x_j}$ of vertices, where $x_j$ is the number of vertices of $C_j$. Let $C = \{C_1, C_2, \ldots, C_{r-1}\}$, and $P = v_P^1,v_P^2, \ldots, v_P^h$, where $u=v_P^1$, $w=v_P^h$ and $h$ is the number of vertices of $P$. Note that $h\geq 3$; otherwise $u=w$, implying that $P$ is a trivial cycle and $G_{i-1}$ has a cycle cover of size $r$. Figure~\ref{fig1} illustrates $C$ and $P$.

We partition some vertices of $G_{i-1}$ into four sets:

$$X_P = \{v_P^q\ |\ (v_P^{q-1}, v_P^h) \in E_{i-1} \mbox{ and } 2 < q < h \},$$

\vspace{-.5cm}

$$X_C = \{v_{C_j}^q\ |\ (v_{C_j}^{(q \mbox{ mod } x_j) +1}, v_P^h) \in E_{i-1}, \ 1 \leq q \leq x_j,  \mbox{ and } C_j\in C  \},$$

\vspace{-.5cm}

$$Y_P = \{v_P^q\ |\ (v_P^{1}, v_P^q) \in E_{i-1}\ and\ 2 < q < h \},$$


and 

\vspace{-.5cm}

$$Y_C = \{v_{C_j}^q\ |\ (v_{C_j}^q, v_P^1) \in E_{i-1}, 1 \leq q \leq x_j, \mbox{ and } C_j\in C\}.$$


Note that $v_{C_j}^1 = v_{C_j}^{(1\mbox{ mod }1) +1}$ for trivial cycles $C_j$. Thus, $X_C$ it is well defined.

Let $X=X_P\cup X_C$ and $Y=Y_P\cup Y_C$.

The set $X$, is the set of vertices (with the exception of $v_P^h$) in which its predecessor in the path or its successor in the cycle is adjacent to $v_P^h$. Also, the set $Y$, is the set of vertices adjacent to $v_P^1$ (with the exception of $v_P^2$). Note that the size of both $X$ and $Y$ are bounded by $n - 3$, since they exclude the vertices $v_P^1,v_P^2$ and $v_P^h$ of $P$. 
Besides that, we can observe that $$|X| = d(v_P^h)-1 \mbox{ and } |Y| = d(v_P^1)-1,$$ where $d(v)$ is the degree of the vertex $v$. Therefore, the following holds:

$$|X| + |Y|  = d(v_P^h) + d(v_P^1) - 2$$
that is,
$$|X| + |Y| \geq n - 2$$
since $d(u)+d(w)\geq n$ where $u=v_P^1$, $w=v_P^h$, and $\{uw\} = E_i \setminus E_{i-1}$.


However, $|X\cup Y|\leq n-3$ because both $X$ and $Y$ exclude $v_P^1,v_P^2$ and $v_P^r$.
Therefore, there is at least one vertex that belong to both $X$ and $Y$. Note that $(X_P\cup Y_P) \cap (X_C\cup Y_C)=\emptyset$, since, by definition, the elements of the covering are vertex disjoint.

Therefore, there are two possibilities:

\begin{enumerate}
    
    \item There is a vertex $v_P^q$ belonging to the path $P$ such that $v_P^q\in X_P\cap Y_P$. This implies that $G_{i-1}$ already had a cycle covering exactly the vertices of $P$ before the addition of the edge $uw=v_P^1v_P^h$, which could be formed as follows (see Fig.~\ref{case1}): $$v_P^1,v_P^2,\ldots, v_P^{q-1}, v_P^h, v_P^{h-1}, v_P^{h-2}, \ldots, v_P^{q+1}, v_P^{q},v_P^1;$$
    
\begin{figure}[!htb]
    \centering
        \scalebox{0.3}{
            \begin{tikzpicture}
                \begin{scope}[every node/.style={circle,thick,draw, scale=2, minimum size=1cm}]
                    \node (1) at (0,0) {\Large \textcolor{red}{$u$}};
                    \node (2) at (3,0) {\Large };
                    \node (3) at (6,0) {\Large };
                    \node (4) at (9,0) {\Large };
                \end{scope}
                \begin{scope}[every node/.style={fill=lightgray,circle,thick,draw, scale=2, minimum size=1cm}]
                    \node (5) at (12,0) {};
                \end{scope}
                \begin{scope}[every node/.style={circle,thick,draw, scale=2, minimum size=1cm}]
                    \node (6) at (15,0) {};
                    \node (7) at (18,0) {\Large};
                    \node (8) at (21,0) {\Large \textcolor{red}{$w$}};
                \end{scope}
 
                \begin{scope}[>={stealth[black]},
                              every node/.style={fill=white,circle},
                              every edge/.style={draw=black, thick}
                              ]
                    \path [-] (1) edge (2); 
                    \path [-] (2) edge (3);
                    \path [-] (3) edge (4);
                    \path [-] (4) edge (5);
                    \path [-] (5) edge (6);
                    \path [-] (6) edge  (7);
                    \path [-] (7) edge  (8);
                \end{scope}
                    \begin{scope}[>={stealth[black]},
                              every node/.style={fill=white,circle},
                              every edge/.style={draw=blue, thick}
                              ]
                            \path [-] (1) edge [bend left]  [dashed]  (5);
                    \end{scope}
                    \begin{scope}[>={stealth[black]},
                              every node/.style={fill=white,circle},
                              every edge/.style={draw=blue, thick}
                              ]
                              \path [-] (8) edge [bend left]  [dashed]  (4);
                    \end{scope}                
                   \begin{scope}[every node/.style={scale=1.5}]
                    \node (label) at (0,-3) {\huge {$v_P^1$}};
                    \node (label) at (3,-3) {\huge {$v_P^2$}};
                    \node (label) at (6,-3) {\huge {$\cdots$}};
                    \node (label) at (9,-3) {\huge {$v_P^{q-1}$}};
                    \node (label) at (12,-3) {\huge {$v_P^{q}$}};
                    \node (label) at (15,-3) {\huge {$\cdots$}};
                    \node (label) at (18,-3) {\huge {$v_P^{h-1}$}};
                    \node (label) at (21,-3) {\huge {$v_P^{h}$}};
                \end{scope}
            \end{tikzpicture}
    }%
    \caption{Representation of case 1, where the vertex $v_P^q$, highlighted in gray, belongs to $X_P\cap Y_P$.}
    \label{case1}
\end{figure}
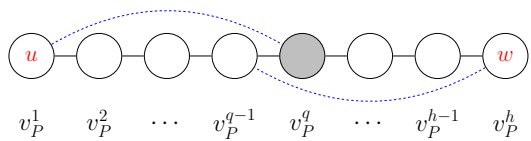
    
    \item There is a vertex $v_{C_j}^q$ belonging to a cycle $C_j\in C$ such that $v_{C_j}^q\in X_C\cap Y_C$. In this case, $G_{i-1}$ has a larger cycle that can be obtained by merging $C_j$ with the path $P$ as follows (see Fig.\ref{case2}):
     
     $$v_{C_j}^{(q \mbox{ mod } x_j) +1},v_{C_j}^{(q \mbox{ mod } x_j) +2},\ldots,v_{C_j}^{q},v_P^1,v_P^2,\ldots,v_P^h,v_{C_j}^{(q \mbox{ mod } x_j) +1}.$$
     
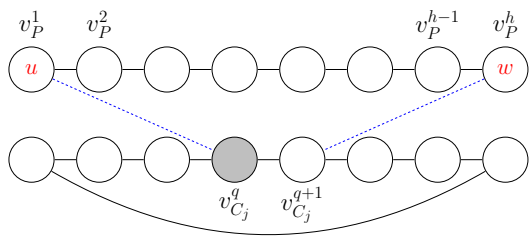
\begin{figure}[!htb]
    \centering
        \scalebox{0.3}{
            \begin{tikzpicture}
                \begin{scope}[every node/.style={circle,thick,draw, scale=2, minimum size=1cm}]
                    \node (1) at (0,0) {\Large \textcolor{red}{$u$}};
                    \node (2) at (3,0) {\Large };
                    \node (3) at (6,0) {\Large };
                    \node (4) at (9,0) {\Large };
                    \node (5) at (12,0) {};
                    \node (6) at (15,0) {};
                    \node (7) at (18,0) {\Large};
                    \node (8) at (21,0) {\Large \textcolor{red}{$w$}};

                    \node (9) at (0,-4) {};                    
                    \node (10) at (3,-4) { };
                    \node (11) at (6,-4) { };

                    \node (13) at (12,-4) {};
                    \node (14) at (15,-4) {};
                    \node (15) at (18,-4) {};
                    \node (16) at (21,-4) {};
                \end{scope}
                \begin{scope}[every node/.style={fill=lightgray,circle,thick,draw, scale=2, minimum size=1cm}]
                    \node (12) at (9,-4) { };
                \end{scope}
 
                \begin{scope}[>={stealth[black]},
                              every node/.style={fill=white,circle},
                              every edge/.style={draw=black, thick}
                              ]
                    \path [-] (1) edge (2); 
                    \path [-] (2) edge (3);
                    \path [-] (3) edge (4);
                    \path [-] (4) edge (5);
                    \path [-] (5) edge (6);
                    \path [-] (6) edge  (7);
                    \path [-] (7) edge  (8);
                    \path [-] (9) edge  (10);
                    \path [-] (10) edge  (11);
                    \path [-] (11) edge  (12);
                    \path [-] (12) edge  (13);
                    \path [-] (13) edge  (14);
                    \path [-] (14) edge  (15);
                    \path [-] (15) edge  (16);
                    \path [-] (16) edge [bend left]  (9);
                \end{scope}
                   \begin{scope}[>={stealth[black]},
                              every node/.style={fill=white,circle},
                              every edge/.style={draw=blue, thick}
                              ]
                    \path [-] (1) edge  [dashed]  (12);
                \end{scope}
                   \begin{scope}[>={stealth[black]},
                              every node/.style={fill=white,circle},
                              every edge/.style={draw=blue, thick}
                              ]
                    \path [-] (8) edge  [dashed]  (13);
                \end{scope}
               \begin{scope}[every node/.style={scale=1.5}]
                    \node (label) at (0,2) {\huge {$v_P^1$}};
                    \node (label) at (3,2) {\huge {$v_P^2$}};
                    \node (label) at (18,2) {\huge {$v_P^{h-1}$}};
                    \node (label) at (21,2) {\huge {$v_P^{h}$}};
                    \node (label) at (9,-6) {\huge {$v_{C_j}^{q}$}};
                    \node (label) at (12,-6) {\huge {$v_{C_j}^{q+1}$}};
                \end{scope}
            \end{tikzpicture}
    }%
    \caption{Representation of case 2, where the vertex $v_{C_j}^q$, highlighted in gray, belongs to $X_C\cap Y_C$.}
    \label{case2}
\end{figure}
     
\end{enumerate}     

In the first case $G_{i-1}$ has a cycle cover of size $r$, while in the second case $G_{i-1}$ has a cycle cover of size $r-1$. Both cases contradicts the hypothesis that $G_{i-1}$ does not have a cycle cover of size at most $r$.

Therefore, there is no $1 \leq i\leq t$ such that $G_{i-1}$ does not have a cycle cover of size at most $r$ and $G_i$ has such a cycle cover. Thus, if $G_t=\operatorname{cl}_{n}(G)$ has a cycle cover of size at most $r$ then $G_0=G$ also has a cycle cover of size at most $r$. \qed
\end{proof}

Lemma~\ref{cyclecovertheorem} states that for any positive integer~$r$, the graph property $\Upsilon$ of having a cycle cover of size at most $r$ satisfies that $s(\Upsilon)\leq n$. We remark that such a bound is tight since whenever $r=1$, the target $\Upsilon$ is the Hamiltonicity property, which is well known to have stability (exactly) equal to $n$ (c.f.~\cite{bondy1976method}).

\smallskip

Now, observe that the class of graphs with co-degeneracy at most $k$ is closed under completion (edge addition), in the same way as the class of graphs with degeneracy at most $k$ is closed under edge removals. 
Recall that $\operatorname{cl}_n(G)$ is uniquely determined from a $n$-vertex graph $G$ and it can be constructed in polynomial time. 
Therefore, by Lemma~\ref{cyclecovertheorem}, we may assume that $G=\operatorname{cl}_n(G)$ whenever $G$ is an instance of {\sc Minimum Cycle Cover} parameterized by co-degeneracy.

We call by \emph{co-vertex cover} any set of vertices whose removal makes the resulting graph complete, i.e., a vertex cover in the complement graph. The \emph{co-vertex cover number} of a graph $G$, $co\mbox{-}vc(G)$, is the size of its minimum co-vertex cover.\footnote{$co\mbox{-}vc(G)$ is also called the {\em distance to clique} of $G$, and a co-vertex cover set is also called a {\em clique modulator}.} 

The following theorem is a key tool for this work.

\begin{theorem}[\cite{MFCS_2021}]\label{theo:closure-co-vertexcover}
Let $\ell\geq 0$ be an integer.  If a graph $G$ has co-degeneracy $k$ and $G=\operatorname{cl}_{n+\ell}(G)$ then $G$ has  co-vertex cover number bounded by $2k+\ell+1$.  In addition,  a co-vertex cover of $G$ with size at most $2k+\ell+1$ can be found in polynomial time.
\end{theorem}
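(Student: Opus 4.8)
The plan is to argue entirely in the complement $\overline{G}$. Having co-degeneracy $k$ means precisely that $\overline{G}$ is $k$-degenerate, and the hypothesis $G=\operatorname{cl}_{n+\ell}(G)$ means that no further edge can be inserted into $G$, i.e.\ every pair of vertices non-adjacent in $G$ has degree sum at most $n+\ell-1$. Since $d(u)=n-1-d_{\overline{G}}(u)$, a pair non-adjacent in $G$ is exactly an edge $uv$ of $\overline{G}$, and the closure condition rewrites as
$$d_{\overline{G}}(u)+d_{\overline{G}}(v)\ge n-\ell-1\qquad\text{for every edge } uv\in E(\overline{G}).$$
The key translation I would exploit is that a co-vertex cover of $G$ is nothing but a vertex cover of $\overline{G}$, equivalently the complement of an independent set of $\overline{G}$ (a clique of $G$). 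So it suffices to exhibit a vertex cover of $\overline{G}$ of size at most $2k+\ell+1$.

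First I would single out the high-degree set $A=\{v: d_{\overline{G}}(v)\ge (n-\ell-1)/2\}$. If two vertices both lie outside $A$, then their $\overline{G}$-degrees sum to less than $n-\ell-1$, so by the displayed inequality they cannot be adjacent in $\overline{G}$; hence $V\setminus A$ is independent in $\overline{G}$ and $A$ is a vertex cover of $\overline{G}$, i.e.\ a co-vertex cover of $G$. The whole problem therefore reduces to the single estimate $|A|\le 2k+\ell+1$.

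The tool I would use for that estimate is a purely degeneracy-based count: for every $j\ge 0$, the set $\{v: d_{\overline{G}}(v)\ge n-1-j\}=\{v: d(v)\le j\}$ has at most $k+j+1$ vertices. Indeed, each such vertex is non-adjacent in $\overline{G}$ to at most $j$ others, so this set induces in $\overline{G}$ a subgraph of minimum degree at least its size minus $j+1$; as $\overline{G}$ is $k$-degenerate, that subgraph must contain a vertex of degree at most $k$, forcing the size to be at most $k+j+1$. Specializing to $j=k+\ell$ yields $|\{v:d(v)\le k+\ell\}|\le 2k+\ell+1$. The plan is then to show that $A$ coincides with (or is contained in) this near-universal set $\{v:d(v)\le k+\ell\}$. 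Here the edge condition does the work: a vertex of $\overline{G}$-degree at most $k$ exists by degeneracy, and each of its neighbours must have $\overline{G}$-degree at least $n-\ell-1-k$, i.e.\ be near-universal; more generally the displayed inequality forces the neighbours of any moderate-degree vertex of $A$ to again have large degree, so a member of $A$ that is not near-universal would drag a linear-size cluster into $A$ and violate the sparsity (at most $kn$ edges) of the $k$-degenerate graph $\overline{G}$.

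The main obstacle I anticipate is exactly this last step: controlling the \emph{moderate-degree band} $(n-\ell-1)/2\le d_{\overline{G}}(v)< n-1-(k+\ell)$ and proving, uniformly in $n$, that it is empty, so that $A$ is genuinely near-universal and the clean count applies. I would handle the regime of small $n$ separately and trivially (if $n\le 2k+\ell+2$, then a single vertex forms a clique and $co\text{-}vc(G)\le n-1\le 2k+\ell+1$), reserving the double counting between $A$ and its independent complement for the remaining regime, where the quadratic lower bound on the number of edges forced by a moderate vertex contradicts $|E(\overline{G})|\le kn$. Finally, the argument is constructive: a degeneracy ordering of $\overline{G}$ and all vertex degrees are computable in polynomial time, so the set $A$ (equivalently the near-universal set) can be produced as the required co-vertex cover in polynomial time.
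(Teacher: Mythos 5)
Your complement translation, the observation that $A=\{v:\ d_{\overline{G}}(v)\ge (n-\ell-1)/2\}$ is a vertex cover of $\overline{G}$, and the counting bound $|\{v:\ d_{\overline{G}}(v)\ge n-1-j\}|\le k+j+1$ are all correct; note also that the paper itself does not prove this theorem (it quotes it from the MFCS 2021 paper), so your attempt has to stand on its own. It does not: the gap is precisely the step you yourself flag, and it is not a small-$n$ versus large-$n$ issue. You want to show the moderate band is empty, i.e.\ every $v\in A$ has $d_{\overline{G}}(v)\ge n-1-k-\ell$, by arguing that a band vertex forces too many edges. But a band vertex of degree $d$ only forces
$$|E(\overline{G})|\ \ge\ \tfrac{1}{2}\,d\,(n-\ell-1-d),$$
and this quadratic is smallest at the top of the band: for $d=n-\ell-2-k$ it equals $\tfrac{1}{2}(k+1)(n-\ell-2-k)$, which is below $kn$ for every $k\ge 1$ and every $n$. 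So vertices with $\overline{G}$-degree in the window between roughly $n-\ell-4k$ and $n-\ell-2-k$ are never excluded by sparsity, no matter how large $n$ is; your method would at best yield $|A|\le 5k+\ell+O(1)$, not $2k+\ell+1$. The heuristic that a non-near-universal member of $A$ ``drags a linear-size cluster into $A$'' fails for the same reason: the neighbours of a band vertex of degree $d$ are only guaranteed degree $n-\ell-1-d$, which near the top of the band is about $k+1$, far below the threshold of $A$. Indeed, the hypotheses you actually use at this point (edge degree-sum condition plus $|E(\overline{G})|\le kn$) do not imply band-emptiness at all: the join of $K_3$ with $\overline{K_6}$, together with $3$ isolated vertices ($n=12$, $k=2$, $\ell=0$), satisfies both, yet its three join vertices have degree $8$, squarely inside the band.

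The missing ingredient is a \emph{hereditary} use of degeneracy (every induced subgraph of $\overline{G}$ has a vertex of degree at most $k$), not just the global edge count. A repair in your spirit: take instead $S=\{v:\ d_{\overline{G}}(v)\ge k+1\}$. For $n\ge 2k+\ell+3$ (smaller $n$ is your trivial case), two ``low'' vertices (degree $\le k$) cannot be adjacent, so $S$ is a vertex cover; moreover a low vertex cannot be adjacent to a ``moderate'' vertex (degree between $k+1$ and $n-\ell-2-k$), since the degree sum would be at most $n-\ell-2$; hence every moderate vertex has \emph{all} its neighbours inside $S$. Now peel $\overline{G}$, repeatedly deleting a vertex of current degree at most $k$, and consider the first deleted vertex $x\in S$. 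It cannot be moderate, since at that moment all of its at least $k+1$ neighbours lie in $S$ and are still present; so $x$ is near-universal, $d_{\overline{G}}(x)\ge n-\ell-1-k$, and at its deletion at least $d_{\overline{G}}(x)-k\ge n-\ell-1-2k$ of its neighbours were already deleted, all of them low. Hence $|S|\le n-(n-\ell-1-2k)=2k+\ell+1$. This keeps your algorithmic conclusion, since $S$ and a degeneracy ordering of $\overline{G}$ are computable in polynomial time, and it makes the band-emptiness claim unnecessary.
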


From Lemma~\ref{cyclecovertheorem} and Theorem~\ref{theo:closure-co-vertexcover}, the problem of solving {\sc Minimum Cycle Cover} on instances $G$ with co-degeneracy $k$ can be reduced in polynomial time to the problem of solving {\sc Minimum Cycle Cover} on instances $G'=\operatorname{cl}_n(G)$ with co-vertex cover number at most $2k+1$.
Therefore, in the next section we will focus on parameterization by the co-vertex cover number.

\section{Polynomial kernelization}

In~\cite{jansen2019hamiltonicity}, Jansen, Kozma, and Nederlof showed that given a graph $G$ with $n$ vertices such that at least $n-k$ vertices of $G$ have degree at least $\frac{n}{2}$, there is a deterministic algorithm that constructs in polynomial time a graph $G'$ with at most $3k$ vertices, such that $G$ is Hamiltonian if and only if $G'$ is Hamiltonian. In other words, they showed that the {\sc Hamiltonian Cycle}
problem parameterized by such a $k$ has a kernel with a linear number of vertices.

First, we remark that such a parameterization that aims to explore a ``distance measure'' ($k$) of a given graph $G$ from satisfying the Dirac property, when applied to problems that are $n$-stable (such as {\sc Hamiltonian Cycle} and {\sc Minimum Cycle Cover}) can be polynomial-time reduced to the case where the co-degeneracy is bounded by $k$. Since for such problems one can consider only instances $G'$ such that $G'=\operatorname{cl}_n(G')$, from a graph $G$ with $n$ vertices such that at least $n-k$ vertices of $G$ have degree at least $\frac{n}{2}$, we obtain an instance $G'=\operatorname{cl}_n(G)$ having a clique of size at least $n-k$. 

Therefore, in the following, we extend the ``relaxed'' Dirac result from~\cite{jansen2019hamiltonicity} by considering co-degeneracy and the {\sc Minimum Cycle Cover} problem.


\begin{theorem}\label{theorembartADAPTADO}
There is a polynomial-time algorithm that, given a graph $G$ and a nonempty set $S \subseteq V(G)$ such that $G - S$ is a clique, outputs an induced subgraph $G'$ of $G$ on at most $3|S|$ vertices such that $G$ has a cycle cover of size at most $r$ if and only if $G'$ has a cycle cover of size at most $r$.
\end{theorem}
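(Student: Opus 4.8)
The plan is to keep all of $S$ together with a carefully chosen set $B'$ of at most $2|S|$ vertices of the clique $C:=V(G)\setminus S$, and to output the induced subgraph $G'=G[S\cup B']$, which then has at most $3|S|$ vertices. Writing $k=|S|$, if $|C|\le 2k$ there is nothing to do and we return $G'=G$; so assume $|C|>2k$. Since the algorithm does not receive $r$, I would prove the equivalence for \emph{all} $r$ at once, which is the same as proving that $G$ and $G'$ have the same minimum cycle cover number; I would obtain this from two inequalities $m_{G'}\le m_G$ and $m_G\le m_{G'}$, one per direction. The guiding intuition, inherited from Jansen et al.~\cite{jansen2019hamiltonicity}, is that in any cycle cover a clique vertex plays one of two roles: it is an \emph{attachment vertex}, incident in the cover to a vertex of $S$, or it is \emph{interior}, with both of its cover-edges staying inside $C$. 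Because $C$ is complete, interior clique vertices are interchangeable and can be freely inserted into, or shortcut out of, any cover-path that already runs through $C$; what must be preserved by $B'$ is only (i) which clique vertices can attach to each $s\in S$, and (ii) that enough clique vertices remain to cover $C$ without changing the number of cycles.

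First I would fix $B'$ through a matching/flow argument on the bipartite graph $H$ of $G$-edges between $S$ and $C$. Since each $s\in S$ has degree exactly $2$ in a cover, the attachment edges of any cover form a subgraph of $H$ with all $S$-degrees at most $2$, so at most $2k$ clique vertices are used as attachment vertices by any single cover. I would therefore select $B'$ with $|B'|\le 2k$ satisfying a defect-Hall condition strong enough that \emph{every} degree-$\le 2$ attachment pattern realizable in $H$ is already realizable inside $H[S,B']$ (for instance by saturating the clique side with iterated maximum matchings, or via an Expansion-Lemma-style selection). Granting this, the inequality $m_{G'}\le m_G$ is routine: starting from a minimum cover of $G$, I would reroute each attachment endpoint to an equivalent vertex of $B'$ using the Hall property, then repeatedly shortcut each interior clique vertex lying outside $B'$ by the corresponding clique edge, keeping all vertices of $B'$ threaded onto their (modified) cycles. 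These operations never create a cycle, so they yield a cover of $G'=G[S\cup B']$ of size at most $m_G$.

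For the reverse inequality $m_G\le m_{G'}$ I would take a \emph{minimum} cover of $G'$ and reinsert the $|C|-|B'|$ discarded clique vertices without increasing the cycle count. The key point is that, because $|B'|>k$, some cycle of a minimum cover must contain two consecutive vertices of $B'$: otherwise every vertex of $B'$ would be either a trivial one-vertex cycle or flanked on its cycle by two distinct vertices of $S$, and the latter can occur at most $k$ times (the $S$-side offers only $2k$ cover-incidences), while a minimum cover cannot leave free clique vertices as trivial cycles once a clique edge is available to absorb them. Fixing such an internal clique edge $ab$, I would replace it by the path $a\,d_1\cdots d_t\,b$ through all discarded vertices $d_1,\dots,d_t$ (every edge present since $C$ is a clique), absorbing them into a single existing cycle and preserving the count. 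Combined with the forward inequality this gives $m_G=m_{G'}$ and hence the stated equivalence for every $r$; the induced-subgraph property and the bound $|S|+|B'|\le 3|S|$ are immediate.

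I expect the main obstacle to lie in the attachment step: proving that a \emph{linear}-size set $B'$ can host the attachment pattern of an arbitrary cover, and that the endpoints of all of $S$ can be rerouted into $B'$ \emph{simultaneously and vertex-disjointly} rather than one $s$ at a time, requires exactly the right expansion/defect-Hall statement. The second delicate point, specific to cycle cover as opposed to Hamiltonicity, is the exact bookkeeping of the number of cycles across both transformations, since the theorem needs equality of the optimum rather than a one-sided bound; the shortcutting and absorption steps must be arranged so that no cycle is accidentally split or merged, and the degenerate small cases—arcs containing a single clique vertex, cycles of length exactly $3$, and the handling of trivial cycles—are precisely where this bookkeeping is most easily broken and must be checked by local rerouting inside the clique.
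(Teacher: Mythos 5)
Your overall strategy coincides with the paper's (both descend from Jansen, Kozma, and Nederlof~\cite{jansen2019hamiltonicity}): keep $S$, select at most $2|S|$ clique vertices via a matching computation on the bipartite $S$--$C$ attachment graph, and handle the easy direction by absorbing the discarded clique vertices into a trivial cycle or into a clique--clique edge of the small cover; your inequality $m_G\le m_{G'}$, including the counting argument forcing such a cycle to exist, matches the paper's almost verbatim. The genuine gap is in the hard direction $m_{G'}\le m_G$, and it is precisely the step you flag and then assume (``Granting this\ldots''): you never construct $B'$, and you never prove the defect-Hall/expansion property that every attachment pattern can be rerouted into $B'$ simultaneously and vertex-disjointly. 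This is not a routine omission; it is the technical heart of the theorem. The paper resolves it with a concrete mechanism: duplicate each vertex of $S$ into two copies, compute a maximum matching $M$ in the doubled bipartite graph $H$, let $C'$ be the matched clique vertices (padded to size $|S|+1$), and then use the alternating-path reachability structure (the sets $R$, $R_C$, $R_{S'}$ of Claim~\ref{claim35Bart}, imported from~\cite{jansen2019hamiltonicity}) to extend $F_1=F[S]$ to a subgraph $F_2$ in which every vertex of $S$ has degree exactly two: vertices $v$ with $v_1,v_2\in R_{S'}$ receive their two matching edges, and all other vertices of $S$ keep their original cover edges, which reachability guarantees already land in $C'$.

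Moreover, the rerouting lemma as you state it is too weak even if it were proved. Preserving the $S$-degrees and keeping the $C$-degrees at most two does not prevent the rerouted attachment edges from creating cycles that were absent from the original cover $F$. Concretely, if $s_1s_2$ is an edge of $F[S]$ and in $F$ the vertices $s_1,s_2$ attach to distinct clique vertices outside $B'$, a degree-preserving rerouting may attach both to the same $c\in B'$, closing a new triangle $s_1s_2c$; enough such accidents push the cycle count of the resulting cover of $G'$ above $r$, and your claim that the operations ``never create a cycle'' fails. What the argument actually needs --- and what the paper's two-step construction delivers --- is the structural statement that every cycle of the extended subgraph $F_2$ is already a cycle of $F$ (Claim~\ref{claimCycleF2}); this holds because Step~1 adds at most one matching edge per clique vertex, Step~2 reuses only edges of $F$, and Claim~\ref{claim35Bart}(5) makes the two steps disjoint on clique vertices. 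Combined with the observation that some cycle of $F$ contains an edge inside $C$ (since $|C|>|S|$), this yields a cycle-path cover of $S$ with $C'$-endpoints and at most $r-1$ cycles, after which the merge through the clique costs exactly one new cycle (Claim~\ref{claimCyclePath}). So the bookkeeping you defer to ``local rerouting'' cannot be separated from the selection of $B'$: the selection lemma must be stated relative to the given cover $F$, not merely as a degree/Hall condition.
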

\begin{proof}
Let $G=(V,E)$ be a graph having a co-vertex cover $S$. Let $C= V(G) \setminus S$. If $|C| \leq 2|S|$ then by setting $G' = G$ the claim holds. Now, assume that $|C| > 2|S|$.

As in~\cite{jansen2019hamiltonicity}, let $S'= \{v_1, v_2 : v \in S\}$ be a set containing two representatives for each vertex of $S$. We construct a bipartite graph $H$ on vertex set $C \cup S'$, where for each edge $cv \in E(G)$ with $c \in C$ and $v \in S$, we add the edges $cv_1$, $cv_2$ to $E(H)$. 

Now, we compute a maximum matching $M \subseteq E(H)$ of $H$. Let $C^*$ be the subset of vertices of $C$ saturated (matched) by $M$. If $|C^*| \geq |S| + 1$ then set $C' = C^*$; otherwise, let $C' \subseteq C$
be a superset of $C^*$ with size $|S| + 1$. Finally, set $G' = G[C' \cup S]$.

Note that $G'$ has at most $3|S|$ vertices, because $C'$ has at most $2|S|$ vertices.

First, suppose that $G'$ has cycle cover $Q'$ of size at most $r$. 
Since $G'$ is a subgraph of $G$, the set $Q'$ is a set of vertex disjoint cycles of $G$ covering $S\cup C'\subseteq V(G)$. Thus, only vertices of $C\setminus C'$ are not covered by $Q'$.  
However, since the size of $C'$ is greater than the size of $S$, there is at least one cycle $Q_j\in Q'$ that either is a single vertex of $C$ or contains an edge between vertices of $C$. If $|Q_j|=1$ then we can replace it by a cycle containing all the vertices of $(C\setminus C')\cup Q_j$. If $Q_j$ has an edge $uv$ such that $u,v\in C$, then we can replace this edge by a $uv$-path containing the vertices of $C\setminus C'$ as internal vertices. In both cases we obtain a cycle cover of size at most $|Q'|$ in the graph $G$.

At this point, it remains to show that if $G$ has a cycle cover of size $r$ then $G'$ has a cycle cover of size at most $r$. 






Using a strategy similar to that in~\cite{jansen2019hamiltonicity}, we first present a structure that implies cycle covers of size at most $r$ in $G'$. For a vertex set $S^*$ in a graph $G^*$, we define a {\em cycle-path cover} of $S^*$ in $G^*$ as a set $L$ of pairwise vertex-disjoint simple paths or cycles such that each vertex of $S^*$ belongs to exactly one element of $L$, i.e., $L$ can be seen as a subgraph with maximum degree two which contains every vertex of $S^*$.
For a vertex set $C^*$ in $G^*$, we say that a cycle-path cover $L$ has $C^*$-endpoints if the endpoints of each path $P\in L$ belong to $C^*$. 


\begin{Claim}\label{claimCyclePath}
If $G'$ has a cycle-path cover of $S$ having $C'$-endpoints and containing at most $r-1$ cycles, then $G'$ has a cycle cover of size at most $r$.
\end{Claim}
\begin{proof}

We have two cases to analyse:
%
    if the cycle-path cover of $S$ contains only cycles, 
    as the number of cycles is at most $r-1$, then we can add a new cycle formed by the vertices not yet covered; 
    if the cycle-path cover contains some paths, by vertex disjointness, all the paths have different endpoints in $C'$, and, since $C'$ is a clique, we can connect such endpoints in such a way as to form a single cycle containing these paths as subgraphs, after that, an edge $uv$ of such a cycle having $u,v\in C'$ can replaced by a $uv$-path containing as internal vertices the vertices of $G'$ that are not in such a cycle-path cover of $S$. In both cases, we conclude that $G'$ has a cycle cover of size at most $r$. \quad $\lrcorner$
%
\end{proof}


Now, considering the bipartite graph $H$ and its maximum matching $M$, let $U_C$ be the set of vertices of $C$ that are not saturated by $M$, and let $R$ be the vertices of $H$ that are reachable from $U_C$ by an $M$-alternating path in $H$ (which starts with a non-matching edge). Set $R_C = R \cap C$ and $R_{S'} = R \cap S'$.

By Claim~\ref{claimCyclePath}, it is enough to show that if $G$ has a cycle cover of size $r$ then $G'$ has a cycle-path cover of $S$ having $C'$-endpoints and containing at most $r-1$ cycles. For that, we consider Claim~\ref{claim35Bart} presented in~\cite{jansen2019hamiltonicity}.

\begin{Claim}[\cite{jansen2019hamiltonicity}]\label{claim35Bart}
The sets $R$, $R_C$, $R_{S'}$ satisfy the following.

\begin{enumerate}
    \item Each $M$-alternating path in $H$ from $U_C$ to a vertex in $R_{S'}$ (resp. $R_C$) ends with a non-matching (resp. matching) edge.

    \item Each vertex of $R_{S'}$ is matched by $M$ to a vertex in $R_C$.
    
    \item For each vertex $x \in R_C$ we have $N_H(x) \subseteq R_{S'}$.
    
    \item For each vertex $v \in S$ we have $v_1 \in R_{S'}$ if and only if $v_2 \in R_{S'}$.
    
    \item For each vertex $v \in S' \setminus R_{S'}$ , we have $N_H(v) \cap R_C = \emptyset$ and each vertex of $N_H(v)$ is saturated by $M$.
\end{enumerate}

\end{Claim}

\begin{lemma}\label{FromGtopath-cyclecover}
If $G$ has a cycle cover of size at most $r$, then $G'$ has a cycle-path cover of $S$ having $C'$-endpoints and containing at most $r-1$ cycles. 
\end{lemma}
\begin{proof}
Let $F$ be a cycle cover of size at most $r$ of $G$. 
Consider $F$ as a 2-regular subgraph of $G$. 
Let $F_1 = F[S]$ be the subgraph of $F$ induced by $S$. Since $F$ is a spanning subgraph of $G$, and $S\subset V(G')$, it follows that $F_1$ is a cycle-path cover of $S$ in $G'$. At this point, we need to extend it to have $C'$-endpoints. 
As in~\cite{jansen2019hamiltonicity}, we do that by inserting edges into $F_1$ to turn it into a subgraph $F_2$ of $G'$ in which each vertex of $S$ has degree exactly two.
%
%
This structure $F_2$ must be a cycle-path cover of $S$ in $G'$ with $C'$-endpoints, since the degree-two vertices $S$ cannot be endpoints of the paths. 

Setting $F_2=F_1$, $R_S = \{v \in S : v_1 \in R_{S'} \mbox{ or } v_2 \in R_{S'}\}$,
%
%
we~proceed~as~follows.

\begin{enumerate}
    \item For each vertex $v \in R_S$, we have $v_1,\ v_2 \in R_{S'}$ by Claim~\ref{claim35Bart}(4), which implies by Claim~\ref{claim35Bart}(2) that both $v_1$ and $v_2$ are matched to distinct vertices $x_1$, $x_2$ in $R_C$. If $v$ has degree zero in subgraph $F_1$, then add the edges $vx_1$, $vx_2$ to $F_2$. If $v$ has degree one in $F_2$ then only add the edge $vx_1$. (we do not add edges if $v$ already has degree two in $F_1$)

    \item For each vertex $v \in S \setminus R_S$, it holds that $N_G(v) \cap R_C = \emptyset$. This follows from the fact that $N_G(v)\ =\ N_H(v_1)\ =\ N_H(v_2)$  and Claim~\ref{claim35Bart}(5). Note that $v \notin R_S$ implies $v_1,\ v_2 \notin R_{S'}$. 
    Hence the (up to two) neighbors that $v \in S \setminus R_S$ has in $C$ on the cycle cover $F$ do not belong to $R_C$ (see also Claim~\ref{claim35Bart}(3)), 
    In addition, Claim~\ref{claim35Bart}(5) ensures that all vertices of $N_G(v)$ are saturated by $H$ and hence belong to $C'$. 
    Thus, for each vertex $v \in S \setminus R_S$, for each edge from $v$ to $C \cap C'$ incident on $v$ in $F$, we insert the corresponding edge into $F_2$.
\end{enumerate}

It is clear that the above procedure produces a subgraph $F_2$ in which all vertices of $S$ have degree exactly two. 
By Claim~\ref{claim35Bart}(5), we have that a vertex $c \in C$ does not have edges added in $F_2$ by both previous steps, thus each vertex $c\in C$ added in $F_2$ has degree at most two in it because 
$c$ has at most one edge in the matching $M$ (see Step 1), while 
$c$ has two edges in the cycle cover $F$ (see Step 2).

At this point, we know that $F_2$ is a cycle-path cover of $S$ having $C'$-endpoints. It remains to show that it contains at most $r-1$ cycles. 

\begin{Claim}\label{claimCycleF2}
Every cycle of $F_2$ is a cycle of $F$.
\end{Claim}
\begin{proof}
Suppose that $F_2$ has a cycle $Q$ that is not in $F$. As $F_2$ is formed from $F_1$, the edges in $Q$ between the vertices of $S$ are also edges of $F$. Furthermore, by construction, the added edges from $F_1$ to obtain $F_2$ are the edges incident to the vertices of $S$. Therefore, there is no edge between the vertices of the clique $C$ in $Q$. 
By Claim~\ref{claim35Bart}(5), we have that a vertex $c \in C$ cannot be incident to two edges of $F_2$ being one added by Step 1 and the other by Step 2 of the construction. 
Since these steps are mutually exclusive with respect to a vertex $c \in C$, and given that $c$ has degree two in $Q$ (since $Q$ is a cycle), we have that the edges of each vertex $c \in C\cap Q$ were added by Step 2 of the construction (Step 1 adds only one edge of the matching). 
However, by construction, the edges in $Q$ incident to a vertex $c \in C$ are the edges in $F$. Therefore, every edge of $Q$ is contained in $F$, contradicting the hypothesis that $Q$ is not contained in $F$. \quad $\lrcorner$
\end{proof}

By hypothesis, $F$ has at most $r$ cycles. Since $|C|>|S|$, it holds that at least one cycle of $F$ must have an edge between vertices of $C$. Thus, at least one cycle of $F$ is not completely contained in $F_2$, which implies, by Claim~\ref{claimCycleF2}, that $F_2$ has at most $r-1$ cycles. Therefore, $F_2$ is a cycle-path cover of $S$ having $C'$-endpoints which contains at most $r-1$ cycles. This concludes the proof of Lemma~\ref{FromGtopath-cyclecover}. \quad $\lrcorner$
\end{proof}

By Lemma~\ref{FromGtopath-cyclecover} and Claim \ref{claimCyclePath}, it holds that if $G$ has a cycle cover of size at most $r$ then $G'$ has a cycle cover of size at most $r$. Since the reduction can be performed in polynomial time, and $|V(G')|\leq 3|S|$, we conclude the proof of Theorem~\ref{theorembartADAPTADO}. \qed
\end{proof}


\begin{corollary}\label{cyclecoverKernel}
{\sc Minimum Cycle Cover} parameterized by co-degeneracy admits a kernel with at most $6k+3$ vertices, where $k=co$-$deg$.
\end{corollary}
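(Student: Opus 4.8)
The plan is to obtain the corollary as a straightforward composition of the three results established above, with no new combinatorial work required. Given an instance $(G,r)$ of {\sc Minimum Cycle Cover} where $G$ has $n$ vertices and co-degeneracy $k$, I would first compute the $n$-closure $\operatorname{cl}_n(G)$ in polynomial time. By Lemma~\ref{cyclecovertheorem}, $\operatorname{cl}_n(G)$ has a cycle cover of size at most $r$ if and only if $G$ does, so there is no loss in replacing $G$ by its closure and assuming henceforth that $G=\operatorname{cl}_n(G)$.

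Since now $G=\operatorname{cl}_n(G)$ and $G$ has co-degeneracy $k$, I would apply Theorem~\ref{theo:closure-co-vertexcover} with $\ell=0$ to obtain, in polynomial time, a co-vertex cover $S$ of $G$ with $|S|\leq 2k+1$; that is, $G-S$ is a clique. With this clique modulator in hand, the reduction of Theorem~\ref{theorembartADAPTADO} produces in polynomial time an induced subgraph $G'$ of $G$ on at most $3|S|$ vertices such that $G$ has a cycle cover of size at most $r$ if and only if $G'$ does. Chaining the two equivalences (closure and then modulator reduction) and combining $|S|\leq 2k+1$ with $|V(G')|\leq 3|S|$ yields $|V(G')|\leq 3(2k+1)=6k+3$, so $(G',r)$ is the claimed kernel and the whole pipeline runs in polynomial time.

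The one point that needs care—and the closest thing to an obstacle—is the nonemptiness hypothesis of Theorem~\ref{theorembartADAPTADO}, since the co-vertex cover returned above can be empty. This happens precisely when $G$ is complete, which, as the complement is then edgeless, is exactly the case $k=0$. I would dispatch this boundary case by taking $S=\{v\}$ for an arbitrary vertex $v$: deleting a vertex from a clique again leaves a clique, so $G-S$ stays complete, $S$ becomes nonempty, and $|S|=1\leq 2k+1$ is preserved (and indeed a complete graph on at least three vertices trivially admits a cycle cover of size one). With this convention the hypotheses of all three results are simultaneously met, every step is polynomial, and the composed reduction outputs an equivalent instance on at most $6k+3$ vertices, which establishes the corollary.
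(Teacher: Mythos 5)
Your proof is correct and follows exactly the route the paper intends for this corollary: reduce to $G=\operatorname{cl}_n(G)$ via Lemma~\ref{cyclecovertheorem}, apply Theorem~\ref{theo:closure-co-vertexcover} with $\ell=0$ to get a co-vertex cover $S$ with $|S|\leq 2k+1$, and then invoke Theorem~\ref{theorembartADAPTADO} to obtain a kernel on at most $3|S|\leq 6k+3$ vertices. Your explicit handling of the boundary case where $S$ is empty (taking $S=\{v\}$ so the nonemptiness hypothesis of Theorem~\ref{theorembartADAPTADO} holds) is a detail the paper glosses over, and it is handled correctly.
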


\section{An exact single-exponential time algorithm}

By Corollary~\ref{cyclecoverKernel}, it holds that an exact and deterministic single-exponential time algorithm for {\sc Minimum Cycle Cover} is enough to obtain an FPT algorithm for {\sc Minimum Cycle Cover} with single-exponential dependency concerning the co-degeneracy of the input graph.  
In~\cite{cutandcount}, using the Cut\&Count technique, M. Cygan, J. Nederlof, Ma. Pilipczuk, Mi. Pilipczuk, J. Rooij and J. Wojtaszczyk produces a $2^{\mathcal{O}(tw)}\cdot|V|^{\mathcal{O}(1)}$ time Monte Carlo
algorithm for {\sc Minimum Cycle Cover} ({\sc Undirected Min Cycle Cover} in~\cite{cutandcount}), where $tw$ is the treewidth of the input graph. 
In~\cite{BODLAENDER201586}, H. Bodlaender, M. Cygan, S. Kratsch, J. Nederlof presented two approaches to design deterministic $2^{\mathcal{O}(tw)}\cdot|V|^{\mathcal{O}(1)}$-time algorithms for some connectivity problems, and claimed that such approaches can be apply to all problems studied in~\cite{cutandcount}. 

Although such approaches can be used to solve {\sc Minimum Cycle Cover} by a single-exponential time algorithm, in order to present a simpler deterministic procedure, below we present a simple and deterministic dynamic programming based on modifying the Bellman–Held–Karp algorithm. 


\begin{theorem}\label{single-exp-alg}
{\sc Minimum Cycle Cover} can be solved in $\mathcal{O}(2^n\cdot n^3)$ time.
\end{theorem}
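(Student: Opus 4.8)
The plan is to adapt the Bellman--Held--Karp dynamic programming scheme, which classically decides Hamiltonicity in $\mathcal{O}(2^n\cdot n^2)$ time by growing a single path anchored at a fixed vertex, so that it instead builds an entire cycle cover while counting the number of cycles produced. The naive two-phase strategy---first precompute, for every $S\subseteq V$, whether $G[S]$ admits a single spanning cycle (via the standard anchored path DP), and then partition $V$ into the fewest such cycle-coverable blocks---would incur a $3^n$ blow-up in the partitioning phase, so the whole computation must instead be fused into one DP with $\mathcal{O}(2^n\cdot n^{\mathcal{O}(1)})$ states.

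Concretely, I would use states $(S,a,v)$, where $S\subseteq V$ is the set of vertices already committed, and the vertices of $S$ are partitioned into a collection of \emph{completed} cycles together with exactly one \emph{open} path whose endpoints are $a$ and $v$; the value stored is the minimum number of completed cycles over all such configurations. The key invariant, enforced by processing cycles in increasing order of their smallest vertex, is that $a$ is the minimum vertex of the current open path and that every still-uncommitted vertex is larger than $a$ (so that $a$ coincides with the smallest uncommitted vertex at the moment its cycle was opened). The transitions are: (i) \emph{extend} the open path by an uncommitted neighbour $u$ of $v$, moving to $(S\cup\{u\},a,u)$ without changing the count; (ii) \emph{close} the open path into a nontrivial cycle when it has at least three vertices and $va\in E(G)$, which increments the count and opens a fresh path rooted at the new smallest uncommitted vertex; and (iii) \emph{close} a single-vertex open path as a trivial cycle. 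Forbidding the closure of two-vertex paths encodes the absence of length-two cycles in simple graphs. Initialising at $(\{\min V\},\min V,\min V)$ with count $0$ and reading off the best count obtained when the last cycle closes with $S=V$ then yields the minimum cycle cover.

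Correctness would follow in two directions: every run of the DP clearly describes a valid collection of vertex-disjoint cycles covering $S$, and conversely any cycle cover of $G$, listed with its cycles ordered by their minimum vertices and each cycle traversed starting from that minimum, is generated by exactly one admissible sequence of transitions. For the running time, there are $\mathcal{O}(2^n\cdot n^2)$ states (the subset $S$, the anchor $a$, and the free endpoint $v$), and each state spends $\mathcal{O}(n)$ time scanning candidate neighbours for an extension and locating the next anchor, for a total of $\mathcal{O}(2^n\cdot n^3)$. I expect the main obstacle to be precisely the bookkeeping that keeps the scheme single-exponential: one must justify that carrying the current anchor $a$ explicitly in the state (rather than re-deriving it from $S$, which is impossible once several cycles coexist) together with the increasing-anchor canonical form both avoids the $3^n$ enumeration of cycle blocks and still reaches every optimal cover at least once. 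The remaining case analysis for trivial cycles and the two-cycle prohibition should be routine.
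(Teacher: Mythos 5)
Your overall strategy is the same as the paper's: a Bellman--Held--Karp-style subset dynamic program that maintains a collection of completed cycles plus a single open path, extends the path one vertex at a time, and pays $+1$ each time the path is closed into a cycle. (The paper anchors the final answer with an artificial isolated vertex and allows the new path to open at an arbitrary vertex, whereas you use a smallest-uncommitted-vertex canonical ordering; that difference is cosmetic, and your state count and $\mathcal{O}(2^n\cdot n^3)$ accounting match the paper's.) However, there is a genuine gap at exactly the point you dismiss as routine: your close transition is guarded by the condition that the open path ``has at least three vertices,'' but this condition is not a function of your state $(S,a,v)$. The same state can be realized both by a configuration whose open path is the single edge $av$ and by a configuration whose open path from $a$ to $v$ has three or more vertices, with the remaining vertices of $S$ redistributed among completed cycles; these configurations may moreover have different cycle counts. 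Since your table stores only the minimum count over all configurations, the recurrence ``close: new value $=$ stored value $+1$'' is unsound --- the minimizing configuration might be the two-vertex one, whose closure is an illegal $2$-cycle. Concretely, if $G$ is a perfect matching on $n$ vertices, a DP that cannot detect two-vertex paths closes each matching edge into a ``cycle'' and returns $n/2$, whereas the correct answer is $n$ (all trivial cycles).

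The repair is precisely the ingredient the paper builds into its state: a Boolean flag (called $P2$ there) recording whether the current open path is exactly a $P_2$, threaded through every case of the recurrence, so that a nontrivial closure is only taken from states whose flag certifies the path is not a two-vertex path (and trivial closures only from states with $s=t$). Equivalently, you can keep two tables per $(S,a,v)$, one for each answer to ``is the path a $P_2$?''. This costs only a factor $2$ in the state space, so your running-time bound survives intact; but as written your DP is either ill-defined (its guard cannot be evaluated from the state) or incorrect (it admits $2$-cycles), and recognizing the need for this extra bit is the one non-routine point of the whole argument.
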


\begin{proof}
Given a graph $G=(V,E)$ with an isolated vertex $w$, a vertex subset $X\subseteq V$, $s,t\in X$, and a Boolean variable $P2$, we denote by 
$M[X,s,t,P2]$ 
the size of a minimum set $S$ of vertex-disjoint cycles but one nonempty vertex-disjoint $st$-path of $G[X]$ such that 
\begin{itemize}
    \item every vertex of $X$ is in an element of $S$;
    \item the $st$-path is not a $P_2$ if the variable $P2=0$;
    \item the $st$-path is a $P_2$ if the variable $P2=1$.
\end{itemize}

Note that $M[V,w,w,0]$ represents the size of a minimum cycle cover of $G$. 

In essence, the $st$-path represents the open cycle that is still being built. The variable $P2$ is a control variable to avoid $P_2$ as cycles of size two. At each step, we can interpret that the algorithm either lengthens the path by adding a new endpoint or closes a cycle and opens a new trivial path. As we can reduce the {\sc Minimum Cycle Cover} problem to the case where the graph has an isolated vertex $w$, we assume that this is the case and consider $w\in X$ just when $X=V$. 

Our recurrence is as follows.

If $X=\{v\}$ then $M[X, s, t, P2]=1$ for $s=t=v$ and $P2=0$;

otherwise, it is $\infty$.

\medskip

If $|X|\geq 2$ then $M[X, s, t, P2]$ is equal to

\[
\begin{cases}
\infty & \text{if $s = t$, $P2 = 1$ }\\

\min\limits_{s',t'\in X\setminus \{t\}\ :\ s't'\in E \mbox{ or } s'=t'}(M[X \setminus \{t\}, s', t', 0]) + 1 & \text{if $s = t$, $P2 = 0$ }\\

\infty & \text{if $s\neq t$, $P2 = 1$, $st\notin E$ }\\

M[X \setminus \{t\}, s, s, 0] & \text{if $s\neq t$, $P2 = 1$, $st\in E$ }\\

\min\limits_{t'\in X\setminus \{t\}\ s.t.\ tt'\in E,\ P_2'\in\{0,1\}}(M[X \setminus \{t\}, s, t', P2']) & \text{if $s\neq t$, $P2 = 0$ }
\end{cases}
\]





The size of the table is bounded by $(2^n-1) \cdot n^2 \cdot 2$
where $n$ is the number of vertices of the graph. Regarding time complexity, we have three cases: when $P2=1$ the recurrences can be computed in $\mathcal{O}(1)$ time; when $s=t$ and $P2=0$ the recurrence can be computed in $\mathcal{O}(n^2)$ time, and since there are at most $(2^n-1)\cdot n+1$ cells in this case, the total amount of time taken to compute those cells is $\mathcal{O}(2^n\cdot n^3)$; finally, when $s\neq t$ and $P2=0$ the recurrence can be computed in $\mathcal{O}(n)$ time, but there are $\mathcal{O}(2^n\cdot n^2)$ cells in this case, implying into a total amount of $\mathcal{O}(2^n\cdot n^3)$ time to compute all these cells. Therefore, the dynamic programming algorithm can be performed in $\mathcal{O}(2^n\cdot n^3)$ time. Note that, in addition to determining the size of a minimum cycle cover, one can find it with the same running time. Also, the correctness of the algorithm is straightforward. \qed

\end{proof}


\begin{corollary}\label{cyclecoverCor}
{\sc Minimum Cycle Cover} can be solved in $2^{\mathcal{O}(co\mbox{-}deg)}\cdot n^{\mathcal{O}(1)}$ time.
\end{corollary}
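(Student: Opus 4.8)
The plan is to chain together the structural reductions established earlier in the paper and then invoke the exact algorithm of Theorem~\ref{single-exp-alg}, but only on a kernel whose size is linear in the co-degeneracy $k$.

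First I would preprocess $G$ by computing its $n$-closure $H = \operatorname{cl}_n(G)$, which takes polynomial time. Because the class of graphs of co-degeneracy at most $k$ is closed under edge addition (as observed right after Lemma~\ref{cyclecovertheorem}), the graph $H$ still has co-degeneracy at most $k$; and by Lemma~\ref{cyclecovertheorem} the minimum size of a cycle cover is invariant under passing to the $n$-closure, so $H$ admits a cycle cover of size at most $r$ exactly when $G$ does.

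Since $H = \operatorname{cl}_n(H)$, applying Theorem~\ref{theo:closure-co-vertexcover} with $\ell = 0$ produces, in polynomial time, a co-vertex cover $S$ of $H$ with $|S| \le 2k+1$, that is, $H - S$ is a clique. (If $S$ is empty then $H$ is already complete and the problem is trivial.) Feeding $H$ and such an $S$ into Theorem~\ref{theorembartADAPTADO} yields in polynomial time an induced subgraph $G'$ on at most $3|S| \le 6k+3$ vertices preserving the minimum cycle cover size; this is precisely the kernel of Corollary~\ref{cyclecoverKernel}.

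Finally I would run the dynamic program of Theorem~\ref{single-exp-alg} on $G'$. As $G'$ has $\mathcal{O}(k)$ vertices, this costs $\mathcal{O}\!\left(2^{6k+3}\cdot(6k+3)^3\right) = 2^{\mathcal{O}(k)}$ time, and composing it with the polynomial-time reductions gives total running time $2^{\mathcal{O}(k)}\cdot n^{\mathcal{O}(1)} = 2^{\mathcal{O}(co\mbox{-}deg)}\cdot n^{\mathcal{O}(1)}$. All the genuinely difficult steps have already been carried out, namely determining the stability in Lemma~\ref{cyclecovertheorem} and the kernelization in Theorem~\ref{theorembartADAPTADO}, so the only points to verify here are that the closure step does not increase the co-degeneracy and that the resulting kernel has $\mathcal{O}(k)$ vertices, guaranteeing that the single-exponential algorithm is run only on a bounded-size instance. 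The main (and only mild) obstacle is the bookkeeping that the optimum cycle-cover value is preserved across each reduction, which is immediate from the two-way equivalences in Lemma~\ref{cyclecovertheorem} and Theorem~\ref{theorembartADAPTADO}.
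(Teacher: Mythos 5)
Your proposal is correct and follows exactly the paper's intended argument: reduce via the $n$-closure (Lemma~\ref{cyclecovertheorem}), extract a co-vertex cover of size at most $2k+1$ (Theorem~\ref{theo:closure-co-vertexcover}), kernelize to at most $6k+3$ vertices (Theorem~\ref{theorembartADAPTADO}, i.e., Corollary~\ref{cyclecoverKernel}), and run the exact $\mathcal{O}(2^n\cdot n^3)$ algorithm of Theorem~\ref{single-exp-alg} on the kernel. The paper leaves this chaining implicit, and you have merely spelled it out, including the minor but legitimate checks (idempotence of the closure, monotonicity of co-degeneracy under edge addition, and the nonemptiness requirement on $S$).
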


By Corollary~\ref{cyclecoverCor}, it follows that {\sc Minimum Cycle Cover} on co-planar graphs can be solved in polynomial time, which seems to be unknown in the literature.  

\begin{corollary}
{\sc Minimum Cycle Cover} on graphs $G$ in which at least $n-k$ vertices have degree at least $\frac{n}{2}$ can be solved in $2^{\mathcal{O}(k)}\cdot n^{\mathcal{O}(1)}$ time.
\end{corollary}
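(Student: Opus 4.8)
The plan is to reduce the Dirac-distance parameterization to the co-degeneracy parameterization already handled by Corollary~\ref{cyclecoverCor}, exploiting that {\sc Minimum Cycle Cover} is $n$-stable. First I would replace the input graph $G$ by its $n$-closure $G'=\operatorname{cl}_n(G)$, which can be computed in polynomial time and which, by Lemma~\ref{cyclecovertheorem}, has a cycle cover of size at most $r$ if and only if $G$ does. Thus it suffices to solve the problem on $G'$.

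The crucial observation is that $G'$ contains a clique on at least $n-k$ vertices. Indeed, the closure only adds edges, so degrees are nondecreasing throughout its construction; hence every vertex that starts with degree at least $\frac{n}{2}$ keeps degree at least $\frac{n}{2}$ at every stage. Consequently, for any two of the (at least) $n-k$ initially high-degree vertices, whenever they are nonadjacent during the process their degree sum is at least $n$, so the closure rule forces an edge between them. Therefore all these vertices are pairwise adjacent in $G'$, and the set $S$ consisting of the at most $k$ remaining vertices is a co-vertex cover of $G'$ (that is, $G'-S$ is a clique), so $co\mbox{-}vc(G')\le k$. This is exactly the situation anticipated in the discussion preceding Theorem~\ref{theorembartADAPTADO}, where the soft Dirac condition is turned into a hard clique modulator of size $k$.

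To conclude, I would use that co-degeneracy is bounded above by the co-vertex cover number (just as degeneracy is bounded by the vertex cover number: if $S$ is a vertex cover of $\overline{G'}$ then every induced subgraph of $\overline{G'}$ has a vertex of degree at most $|S|$). Hence $G'$ has co-degeneracy at most $k$, and feeding $G'$ to the algorithm of Corollary~\ref{cyclecoverCor} solves {\sc Minimum Cycle Cover} on $G'$, and therefore on $G$, in $2^{\mathcal{O}(k)}\cdot n^{\mathcal{O}(1)}$ time. Equivalently, and more self-containedly, one may apply the kernelization of Theorem~\ref{theorembartADAPTADO} with this set $S$ to obtain an induced subgraph on at most $3k$ vertices preserving the answer, and then run the exact algorithm of Theorem~\ref{single-exp-alg} on it in $\mathcal{O}(2^{3k}\cdot(3k)^3)=2^{\mathcal{O}(k)}$ time; together with the polynomial-time closure and kernelization this again yields total running time $2^{\mathcal{O}(k)}\cdot n^{\mathcal{O}(1)}$. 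The degenerate case in which $S=\emptyset$, where every vertex has degree at least $\frac{n}{2}$ and $G'$ is complete, is handled directly: by Dirac's theorem $G$ is Hamiltonian, so the minimum cycle cover has size one.

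The only genuinely new ingredient is the middle paragraph, namely arguing that the closure converts the relaxed Dirac hypothesis into a clique of size $n-k$, equivalently a co-vertex cover of size $k$; everything afterward is an invocation of machinery established earlier in the paper. I therefore expect no real obstacle beyond carefully justifying, via the monotonicity of degrees under edge addition, that the high-degree vertices become pairwise adjacent in $\operatorname{cl}_n(G)$, which is the heart of the reduction.
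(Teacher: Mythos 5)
Your proposal is correct and follows essentially the same route as the paper: pass to the $n$-closure via Lemma~\ref{cyclecovertheorem}, observe that the at least $n-k$ high-degree vertices become pairwise adjacent in $\operatorname{cl}_n(G)$ (so the remaining at most $k$ vertices form a co-vertex cover, bounding the co-degeneracy), and then invoke the kernelization of Theorem~\ref{theorembartADAPTADO} together with the exact algorithm of Theorem~\ref{single-exp-alg}, i.e., Corollary~\ref{cyclecoverCor}. This is exactly the reduction sketched in the remark preceding Theorem~\ref{theorembartADAPTADO}, and your middle paragraph supplies the degree-monotonicity justification that the paper leaves implicit.
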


\bibliographystyle{abbrv}       
\bibliography{refs2} 









\if 10

\begin{algorithm} [!htb]
\begin{scriptsize}
\caption{Single-exponential dynamic programming algorithm for {\sc Minimum Cycle Cover}.}
\label{cyclecoveralg}
\SetKwInOut{Input}{input}
\SetKwInOut{Output}{output}

\Input{A graph $G$ having an isolated vertex $w$, and a table $M$ indexed by $X$, $s$, $t$ and $P2$ as previously described}
\Output{The minimum cycle cover number of $G$}

\Begin{

Initialize $M$ setting $\infty$ on all its cells\;
\ForEach{$v \in V(G)$}{
    $M[\{v\}, v, v, 0] \gets 1$\;
}
\For{$x \gets 2\ to\ |V(G)|$}{
    \ForEach{ $X \subseteq V(G)$ such that $|X| = x$}{    
        \ForEach{$s \in X$}{
            \ForEach{$t \in X$}{
                \For{$P2 \gets 0\ to\ 1$}{
                    \If{($s = t$) and ($P2 = 1$)}{
                        $M[X, s, t, P2] \gets \infty$\;
                    }
                    \If{($s = t$) and ($P2 = 0$)}{
                         \ForEach{$s' \in X \setminus \{t\}$}{
                             \ForEach{$t' \in X \setminus \{t\}$}{
                                \If{($s't' \in E(G)$) or ($s'=t'$)}{
                                    $answer \gets M[X \setminus  \{t\}, s', t', 0] + 1$\;
                                    $M[X, s, t, P2] \gets \min\{M[X, s, t, P2], answer\}$\;
                                }
                            }
                        }
                    }
                    \If{($s \neq  t$) and ($P2 = 1$) and ($st \notin E(G)$)}{
                       $M[X, s, t, P2] \gets \infty$\;
                    }
                    \If{($s \neq  t$) and ($P2 = 1$) and ($st \in E(G)$)}{
                        $M[X, s, t, P2] \gets M[X \setminus  \{t\}, s, s, 0]$\;
                    }
                    \If{($s \neq  t$) and ($P2 = 0$)}{
                        \ForEach{$t' \in X \setminus \{t\}$ such that $tt' \in E(G)$}{
                            $answer \gets \min\{M[X \setminus  \{t\}, s, t', 0], M[X \setminus  \{t\}, s, t', 1]$\}\;
                            $M[X, s, t, P2] \gets \min\{M[X, s, t, P2], answer\}$\;
                        }
                    }
                }
            }
        }
    }        
}
\Return{$M[V(G), w, w, 0]$\;}

}
\end{scriptsize}
\end{algorithm}\vspace{-.5cm}

\begin{Claim}[proof in the appendix]\label{claim:correctness}
Algorithm~\ref{cyclecoveralg} is correct.
\end{Claim}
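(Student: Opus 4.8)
The plan is to establish correctness of the proposed recurrence by induction on $|X|$, and then bound the running time by a branch-by-branch cell count. First I would pin down the semantics of the table so that the recurrence can be checked mechanically: I interpret a configuration counted by $M[X,s,t,P2]$ as a spanning subgraph of $G[X]$ of maximum degree two that is the vertex-disjoint union of a family of cycles together with one distinguished simple $st$-path (allowing the trivial path $s=t$), where the recorded measure is the number of cycles plus one for the open path. The Boolean $P2$ records whether the open path is exactly the edge $st$; its only role is to forbid turning a two-vertex path into a digon once the path is later closed into a cycle, since in a simple graph every nontrivial cycle has length at least three. With this reading, the base case is immediate: for $X=\{v\}$ the only admissible configuration is the trivial path at $v$ of measure one, matching $M[\{v\},v,v,0]=1$ and $\infty$ otherwise. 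I would also note that the trivial path at the isolated vertex $w$ is realized as the trivial cycle $\{w\}$, so $M[V,w,w,0]$ is exactly the minimum cycle cover size of $G$.

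For the inductive step with $|X|\geq 2$, I would show that any optimal configuration on $X$ arises from an optimal configuration on $X\setminus\{t\}$ by exactly one of three operations, aligned with the five branches of the recurrence. The operations are: (i) \emph{extending} the open path by appending the new endpoint $t$ adjacent to its former endpoint $t'$ — this covers the branch $s\neq t,\,P2=0$ (where $t'$ ranges over neighbors of $t$ and the sub-path may or may not be a $P_2$) and the branch $s\neq t,\,P2=1,\,st\in E$ (recording the creation of a genuine $P_2$); (ii) \emph{closing} the current open $s't'$-path into a cycle and opening a fresh trivial path at $t=s$ — this is the branch $s=t,\,P2=0$, where requiring $s't'\in E$ or $s'=t'$ together with $P2'=0$ guarantees the closed object is a cycle of length at least three or a legitimate single-vertex cycle, and the $+1$ reflects that the former path becomes a counted cycle; and (iii) the structurally impossible cases $s=t,\,P2=1$ and $st\notin E,\,P2=1$, which correctly return $\infty$. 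Both directions — that every branch yields an admissible configuration of the stated measure, and that every admissible configuration is caught by some branch — follow by inspecting the two edges incident to the removed endpoint $t$.

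The hard part will be the digon bookkeeping. I must verify that the $P2$ flag is threaded correctly through the closing step so that a two-vertex open path is never closed into a length-two cycle, while trivial paths ($s'=t'$) are still allowed to close into legitimate single-vertex cycles. This is the one place where simpleness of $G$ and the distinction between ``path of length at least three'' and ``$P_2$'' actually enter, and where an off-by-one in the case split would silently produce forbidden digons; I would check it explicitly against each branch.

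For the running time, the table has $(2^n-1)\cdot n^2\cdot 2=\mathcal{O}(2^n n^2)$ cells. The two $P2=1$ branches and the base are $\mathcal{O}(1)$ per cell; the branch $s\neq t,\,P2=0$ costs $\mathcal{O}(n)$ over its $\mathcal{O}(2^n n^2)$ cells, giving $\mathcal{O}(2^n n^3)$; and the branch $s=t,\,P2=0$ costs $\mathcal{O}(n^2)$ but applies to only the $\mathcal{O}(2^n n)$ cells with $s=t$, again giving $\mathcal{O}(2^n n^3)$. Summing over all branches yields the claimed $\mathcal{O}(2^n n^3)$ bound, and standard backpointer tracing recovers an optimal cover within the same time.
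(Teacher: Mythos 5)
Your proposal is correct, and it takes a recognizably different route from the paper's own argument. You prove the stronger, table-wide invariant --- that \emph{every} entry $M[X,s,t,P2]$ equals the minimum number of elements in a ``cycles plus one distinguished $st$-path'' configuration of $G[X]$ --- by induction on $|X|$, checking for each of the five branches that the operation (extend the open path, close it and open a trivial path at $t$, or report infeasibility) is both sound and exhaustive when one inspects the edges incident to the removed endpoint $t$; correctness of the returned value $M[V,w,w,0]$ then falls out by identifying the trivial path at the isolated vertex $w$ with the trivial cycle $\{w\}$. The paper instead argues only about the final entry: for the upper bound it threads the optimal cycle cover $\{C_1,\dots,C_r\}$ through a specific sequence of cells (showing $M[X_i,v_{C_i}^1,v_{C_i}^{|C_i|},0]\le i$ for the prefixes $X_i$ of the cover), and for the lower bound it supposes $M[V,w,w,0]<r$ and extracts a cycle cover of size less than $r$ from the backpointer trace, contradicting minimality. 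The two architectures are dual ways of packaging the same exchange argument, but yours buys a cleaner and more mechanical verification (the semantics of each cell is pinned down once, and each branch is checked locally, including the $P2$/digon bookkeeping you rightly single out as the delicate point), whereas the paper's certificate-tracing argument avoids formalizing the semantics of intermediate cells at the cost of being sketchier --- in particular its extraction of cycles from cells ``with the same stored value'' leaves precisely the local case analysis you make explicit. Your runtime accounting ($\mathcal{O}(1)$ for the $P2=1$ branches, $\mathcal{O}(n^2)$ over $\mathcal{O}(2^n n)$ cells with $s=t$, and $\mathcal{O}(n)$ over $\mathcal{O}(2^n n^2)$ cells with $s\neq t$) matches the paper's analysis in Theorem~\ref{single-exp-alg}.
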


\noindent {\bf Proof of Claim~\ref{claim:correctness}.}
Let $\mathcal{C}=\{C_1,C_2,\ldots,C_r\}$ be a minimum cycle cover of a graph $G$ having an isolated vertex $w$. 
Let $C_j=v_{C_j}^1,v_{C_j}^2,\ldots,v_{C_j}^{|C_j|}$ for $1\leq j\leq r-1$, and let $C_r=w$.

\medskip

Let $X_i = V(C_1)\cup V(C_2)\cup \ldots \cup V(C_i)$, for $1\leq i\leq r$.

\smallskip

From the construction of the algorithm it follows that:

\begin{itemize}
    \item $M[X_1,v_{C_1}^1,v_{C_1}^{|C_1|},0]=1$; thus,
    \item $M[X_i,v_{C_i}^1,v_{C_i}^{|C_i|},0]\leq i$, for $1\leq i< r$;
    \item $M[X_k,w,w,0]\leq r$
\end{itemize}

Now, suppose that $M[X_k,w,w,0]<r$. Thus, from $M[X_k,w,w,0]$ there is a sequence $P$ of $n$ cells of the table $M$ that certifies it, where each cell have as predecessor in the sequence the cell that gave it the minimum. 

Let $c,c'$ be two consecutive cells of such a sequence, and let $s_c,t_c$, and $s_{c'},t_{c'}$ be the endpoints of the open path in $c$ and $c'$, respectively. 

If the value stored in $c$ is equal to the value stored in $c'$ then, by the recurrence relation, $t$ and $t'$ are adjacent in $G$. Also, if the value stored in $c$ is greater than the value stored in $c'$ then $t_{c'}$ and $s_{c'}$ are adjacent, implying that the cells $c^*$ in the sequence and having the same value stored in $c'$ certificate a cycle formed by each endpoint $t_{c^*}$ of each $c^*$. Hence, each $c$ with stored value greater than its predecessor $c'$ in $P$ certificates a cycle in $G$ form by $\{t_{c^*}: c^* \mbox{ has the same value stored in } c'\}$. Consequently, from $P$ one can build a cycle cover of $G$ with size less than $r$, contradicting that $\mathcal{C}$ is a minimum cycle cover.  Therefore, Algorithm~\ref{cyclecoveralg} computes the size of a minimum cycle cover. \qed

\fi

\end{document}